\def\CC{{\mathbb C}}
\def\RR{{\mathbb R}}
\def\PP{{\mathbb P}}
\def\fv{{\mathfrak v}}
\def\tX{{\widetilde X}}
\def\tw{{\widetilde w}}
\def\nuI#1{{\nu^{\mathrm{I}}_{#1}}}
\def\trp{{\, {}^t\!}}
\def\book#1{\rm{#1}, }
\def\paper#1{\textit{#1}, }
\def\jour#1{\rm{#1}, }
\def\yr#1{({\rm{#1}) }}
\def\vol#1{\textbf{#1}}
\def\pages#1{\rm{#1}}
\def\publaddr#1{\rm{#1}, }
\def\publ#1{\rm{#1}, }
\def\by#1{{\rm{#1}, }}
\newtheorem{theorem}{Theorem}[section]
\def\book#1{\rm{#1}, }
\def\paper#1{\textit{#1}, }
\def\jour#1{\rm{#1}, }
\def\yr#1{({\rm{#1}) }}
\def\vol#1{\textbf{#1}}
\def\pages#1{\rm{#1}}
\def\publaddr#1{\rm{#1}, }
\def\publ#1{\rm{#1}, }
\def\by#1{{\rm{#1}, }}
\begin{document}

\title{A graphical representation of hyperelliptic KdV solutions}

\author{Shigeki MATSUTANI}

\date \today
\maketitle


\begin{abstract}
The periodic and quasi-periodic solutions of the integrable system have been studied for four decades based on the Riemann theta functions.
However, there is a fundamental difficulty in representing the solutions graphically because the Riemann theta function requires several transcendental parameters.
This paper presents a novel method for the graphical representation of such solutions from the algebraic treatment of the periodic and quasi-periodic solutions of the Baker-Weierstrass hyperelliptic $\wp$ functions. 
We demonstrate the graphical representation of the hyperelliptic $\wp$ functions of genus two.
\end{abstract}

{\bf{Keyword:}}
hyperelliptic $\wp$ functions;
graphical representation;
periodic and quasi-periodic solutions;
algebro-geometrical method;
Abelian integral;

{\bf{MSC2020}}
14H45; 
14H40; 
14H70; 
35Q53; 



\section{Introduction}
\label{2sec:Heq}

It has been known since the 1970s that periodic and quasi-periodic solutions of the nonlinear integrable equations are described by the Riemann theta functions \cite{BBEIM} as meromorphic functions on the Jacobi varieties associated with algebraic curves of higher genus, known as algebro-geometric solutions.
Due to the properties of the theta functions, many problems related to the nonlinear integrable equations are solved, and such investigations have currently proceeded as a mainstream of the study of the algebro-geometric solutions.
However, while the graphical description of the meromorphic functions on the Riemann sphere of genus zero, called soliton solutions, and elliptic functions of genus one have been possible and have yielded significant results in many applications, graphical representations of genus $g \geq 2$ have been obtained sporadically but are not yet fully understood, e.g., \cite{HI, BBEIM}.

On the other hand, the Riemann theta function defined on the Abelian variety with transcendental parameters given by the values of integrals is determined by a higher dimensional series whose values are complex numbers.
There are fundamental problems in its analytical and numerical treatment, including the dimension of the parameter spaces due to the Schottky problem.
Therefore, it isn't easy to treat the solutions of integrable systems of these higher genera with the same level of concreteness as elliptic functions by means of the approach of the theta function, which is currently the mainstream approach.

To overcome this situation, the author of this paper, together with Emma Previato, has been studying a reformulation of Abelian function theory using the Weierstrass-Baker approach \cite{Baker97}, intending to construct a framework for more concrete treatment of solutions of integrable equations of these higher genera with the same level of concreteness and representability as elliptic functions \cite{MP23}.

In our approach \cite{MP23}, by generalizing the sigma function in Weierstrass elliptic function theory to general algebraic curves, it is found that meromorphic functions on Jacobi varieties can be treated equivalently to the algebraic structure of algebraic curves \cite{KMP22}.
Since the solutions of nonlinear integrable equations are well described by the sigma functions \cite{BEL97b, BEL20, Mat01a, Mat02c}, these studies allowed us to treat hyperelliptic functions graphically in the paper \cite{MP22, Mat23}, though only partially.

In this paper, we demonstrate that this result can be applied to the meromorphic functions on real hyperelliptic curves to graphically represent the hyperelliptic $\wp$ function.
For simplicity, the method presented in this paper is by limiting the genus to two but it can be easily extended to general genus.
The method is extremely simple and clear, which is based on Euler's numerical quadrature method \cite{RLeV}.
Of course, at the existence and derivation of the solutions, we use the findings from the sigma function, but in the final result we do not deal with any theta-series at all, at least, numerically.
Also, the object of the quadrature method is the Abelian integral, which is the integral of the differentials of the first kind.
Since the differentials of the first kind are holomorphic all over the curve, numerical analytic problems are very slight.
Therefore, the parameters of the calculation are extremely simple, and the method fundamentally solves the problems in the conventional approach.

Recently there appear several fascinating studies on the expression of higher degree solutions in terms of the data of elliptic functions, 
Ayano and Buchstaber gave an explicit representation of a certain class of hyperelliptic $\wp$-functions of genus two in terms of double elliptic functions \cite{AyanoBuchstaber22} based on the results of Bolza \cite{Bol87}, and Belokolos and Enolskii \cite{BE2001, BE2002}.
Kakei explicitly shows the so-called elliptic solitons \cite{EEP} using the B\"acklund transformations and demonstrates them graphically using the elliptic $\wp$ functions \cite{Kakei}.
These moves show that we should go beyond the conventional abstract approach to the integral system and move to the concrete treatment of periodic and quasi-periodic solutions of nonlinear integrable equations of higher degrees.
We consider our approach to be one of them.

\bigskip

The contents of this paper are as follows.
Section two shows the relationship between the KdV equation and the hyperelliptic $\wp$ function algebraically.
Section three provides the novel algorithm to obtain the graphical representation of the hyperelliptic $\wp$ function and its demonstrations.
Section four is devoted to the discussion and conclusion.

\section{the KdV equation and the Baker-Weierstrass hyperelliptic $wp$ functions of genus two}

Let us consider a non-degenerate hyperelliptic curve $X$ of genus two ($g=2$),
\begin{gather}
\begin{split}
y^2 &= (x-b_1)(x-b_2) \cdots (x-b_5)\\
    &= x^5 + \lambda_4 x^4 + \cdots + \lambda_1x + \lambda_0\\
\end{split}
\label{eq:hyperCurve1}
\end{gather}
with infinity point $\infty\in X$.
We also introduce the Abelian universal covering $\kappa_X:\tX\to X$ of $X$, and the Abelian differential of the first kind $\nuI{i}=\displaystyle{\frac{x^{i-1}dx}{2y}}$, $(i=1,2)$, $\nuI{}=\trp(\nuI{1}, \nuI{2})$, which is holomorphic on $X$.
We consider the symmetric products $S^2 \tX$ of $\tX$, and $S^2 X$ of $X$, and the Abelian integral $\tw: S^2 \tX \to \CC^2$,
$\displaystyle{\tw(\gamma_1, \gamma_2)=\int_{\gamma_1}
\begin{pmatrix} \nuI1\\ \nuI2 \end{pmatrix}
+
\int_{\gamma_2}
\begin{pmatrix} \nuI1\\ \nuI2 \end{pmatrix}}$.
Let its image be $u ={}^t\! (u_1, u_2)$ such that $\kappa_X\gamma_a=(x_a, y_a)$, $(a=1,2) \in X$.
Thus we also denote it by $\tw((x_1, y_1), (x_2, y_2))$ simply.
Here $\nuI{}$ is the Abelian differentials of the first kind,

As we mentioned for the case of general genus $g$ in \cite{Mat01a, Mat02c}, we show the relation between the KdV equation and the hyperelliptic $\wp$ functions for $X$ of genus two.

By using the Baker-Weierstrass theory of the hyperelliptic functions \cite{Baker97}, we have
\begin{equation}
\wp_{22}(u)=(x_1+x_2) , \quad
\wp_{21}(u)=-x_1x_2,
\label{5eq:JIF_g2}
\end{equation}
where $((x_1, y_1), (x_2, y_2)) \in S^2 X$, $u=\tw((x_1, y_1), (x_2, y_2))$,
$$
\wp_{ij} = -\frac{\partial^2}{\partial u_i u_j} \log \sigma(u),
$$
and $\sigma$ is the hyperelliptic sigma function.
The expression (\ref{5eq:JIF_g2}) is crucial and known as the Jacobi-inversion formulae.
 It means that there is $\zeta_{2}$ such that 
$\displaystyle{\wp_{21}=-\frac{\partial}{\partial u_1}\zeta_2}$ and 
$\displaystyle{\wp_{22}=-\frac{\partial}{\partial u_2}\zeta_2}$.
Though it is a non-trivial fact which should be investigated by the sigma function theory \cite{Baker97,BEL97b, MP23}, by using the fact, it is easy to obtain the following theorem.

Let $\displaystyle{\wp_{ijk\cdots \ell}
:=-\frac{\partial}{\partial u_i}
\frac{\partial}{\partial u_j}
\frac{\partial}{\partial u_k}\cdots
\frac{\partial}{\partial u_\ell}\log \sigma}$.
Then we have the differential identity of genus two as the KdV equation \cite{Mat01a, 02c}:
\begin{theorem}\label{th:KdV2}
\begin{equation}
\wp_{2222}=6 \wp_{22} + 4\wp_{21}+4\lambda_4 \wp_{22}+2 \lambda_3.
\label{5eq:KdV_wpg2}
\end{equation}
By letting $\displaystyle{
\fv(s,t):=-2\wp_{22}\begin{pmatrix}-4 t\\ -4\lambda_4t +s\end{pmatrix}}$, 
$\fv(s,t)$ obeys the KdV equation,
\begin{equation}
\partial_t \fv+ 6\fv \partial_s \fv + \partial_s^3 \fv =0,
\label{5eq:KdV_v_g2}
\end{equation}
where $\displaystyle{\partial_{t}=\frac{\partial}{\partial t}}$ and 
$\displaystyle{\partial_{s}=\frac{\partial}{\partial s}}$.
\end{theorem}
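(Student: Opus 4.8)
The plan is to prove the differential identity \eqref{5eq:KdV_wpg2} by a direct computation starting from the Jacobi-inversion formulae \eqref{5eq:JIF_g2}, and then to deduce the KdV equation \eqref{5eq:KdV_v_g2} as a formal consequence of that identity together with the chain rule. First I would make the two-point Abelian map explicit: writing $f(x)=x^5+\lambda_4x^4+\cdots+\lambda_0$ so that $y_a^2=f(x_a)$ for $a=1,2$, the differentials of the first kind give
\[
\begin{pmatrix} du_1 \\ du_2 \end{pmatrix}
=
\begin{pmatrix} \frac{1}{2y_1} & \frac{1}{2y_2} \\ \frac{x_1}{2y_1} & \frac{x_2}{2y_2} \end{pmatrix}
\begin{pmatrix} dx_1 \\ dx_2 \end{pmatrix}.
\]
This $2\times2$ matrix has determinant $(x_2-x_1)/(4y_1y_2)$, so inverting it expresses the vector fields $\partial_{u_1},\partial_{u_2}$ as explicit combinations of $\partial_{x_1},\partial_{x_2}$; in particular $\partial_{u_2}=\frac{1}{x_2-x_1}\big(-2y_1\,\partial_{x_1}+2y_2\,\partial_{x_2}\big)$.

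Starting from $\wp_{22}=x_1+x_2$, applying $\partial_{u_2}$ gives $\wp_{222}=2(y_2-y_1)/(x_2-x_1)$, and applying it once more --- using $\partial_{x_a}y_a=f'(x_a)/(2y_a)$ and $y_a^2=f(x_a)$ to eliminate the $y$-derivatives --- collapses everything to the symmetric rational expression
\[
\wp_{2222}=\frac{2\big[(f'(x_1)+f'(x_2))(x_2-x_1)-2(f(x_2)-f(x_1))\big]}{(x_2-x_1)^3}.
\]
The remaining step is purely algebraic: expand the numerator monomial-by-monomial in $f$, verify that it is divisible by $(x_2-x_1)^3$ (the quotient being a symmetric polynomial of degree two), and re-express the result in the elementary symmetric functions $p=x_1+x_2$ and $q=x_1x_2$. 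Only the $x^5$, $\lambda_4x^4$ and $\lambda_3x^3$ terms survive (the lower-degree ones contribute zero), yielding $6p^2-4q+4\lambda_4p+2\lambda_3$; substituting $\wp_{22}=p$ and $\wp_{21}=-q$ from \eqref{5eq:JIF_g2} then gives \eqref{5eq:KdV_wpg2} with leading nonlinear term $6\wp_{22}^2$.

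For the KdV equation, I would differentiate \eqref{5eq:KdV_wpg2} once more with respect to $u_2$ to obtain
\[
\wp_{22222}=12\,\wp_{22}\wp_{222}+4\,\wp_{221}+4\lambda_4\,\wp_{222},
\]
using the index symmetry $\wp_{212}=\wp_{221}$. With $u_1=-4t$, $u_2=s-4\lambda_4t$ and $\fv=-2\wp_{22}$, the chain rule (noting $\partial_s=\partial_{u_2}$ and $\partial_t=-4\partial_{u_1}-4\lambda_4\partial_{u_2}$) gives $\partial_s\fv=-2\wp_{222}$, $\partial_s^3\fv=-2\wp_{22222}$ and $\partial_t\fv=8\wp_{221}+8\lambda_4\wp_{222}$. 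Substituting these into $\partial_t\fv+6\fv\,\partial_s\fv+\partial_s^3\fv$ and invoking the differentiated identity makes every term cancel.

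The main obstacle is the middle algebraic step of the first part: confirming the $(x_2-x_1)^3$ divisibility and carrying out the symmetric-function reduction without sign errors, since the numerator is a degree-five polynomial whose cancellations are delicate. Everything before and after this is a routine chain-rule computation. I also expect the $\zeta_2$ remark in the hypotheses to be precisely what certifies that $x_1+x_2$ and $-x_1x_2$ are genuine symmetric second logarithmic derivatives (equivalently that $\partial_{u_1}\wp_{22}=\partial_{u_2}\wp_{21}$), which legitimizes the index symmetry $\wp_{212}=\wp_{221}$ used in the KdV step.
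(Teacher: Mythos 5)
Your proof is correct and follows essentially the same route as the paper's: invert the matrix of holomorphic differentials to write $\partial_{u_2}=\frac{1}{x_2-x_1}(-2y_1\partial_{x_1}+2y_2\partial_{x_2})$, reduce $\partial_{u_2}^2(x_1+x_2)$ to the symmetric rational expression in $f$, carry out the symmetric-function factorization, and then obtain the KdV equation by differentiating the identity once in $u_2$ and applying the chain rule with $\partial_s=\partial_{u_2}$, $\partial_t=-4\partial_{u_1}-4\lambda_4\partial_{u_2}$. Two minor points in your favor: your leading nonlinear term $6\wp_{22}^2$ is the correct one (the statement's $6\wp_{22}$ is a typo, as the paper's own derivation of $6(x_1+x_2)^2$ confirms, though its intermediate lines contain sign typos that yours avoid), and your reading of the $\zeta_2$ remark as the fact that legitimizes $\wp_{212}=\wp_{221}$ in the purely algebraic setting is exactly the role the paper assigns to it.
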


\begin{proof}
Let us prove this algebraically.
Let $u = \tw((x_1, y_1), (x_2, y_2))$.
Due to the definition of the holomorphic one-forms or the differentials of the first kind;
$$
\begin{pmatrix}
d u_1 \\ du_2
\end{pmatrix}
=
\begin{pmatrix}
 1/2y_1 & 1/2y_2\\
 x_1/2y_1 & x_2/2y_2
\end{pmatrix}
\begin{pmatrix}
d x_1 \\ dx_2
\end{pmatrix}.
$$
Let the matrix in the right hand side be denoted by $M$, i.e., $M=(\partial_i u/\partial x_j)$. 
Using its inverse $M^{-1}$, we have
\begin{equation}
\begin{pmatrix}
d x_1 \\ dx_2
\end{pmatrix}
=
\frac{1}{x_2-x_1}
\begin{pmatrix}
 2x_2y_1 & -2y_1\\
 -2x_1y_2 & 2y_2
\end{pmatrix}
\begin{pmatrix}
d u_1 \\ du_2
\end{pmatrix}.
\label{5eq:g2dxdu}
\end{equation}
Further by using $\trp M^{-1}$, we obtain the relation
\begin{equation}
\begin{pmatrix}
\partial_{u_1} \\ 
\partial_{u_2}
\end{pmatrix}
=
\frac{1}{x_2-x_1}
\begin{pmatrix}
 2x_2y_1 & -2x_1y_2\\
 2y_1 & 2y_2
\end{pmatrix}
\begin{pmatrix}
\partial_{x_1} \\ 
\partial_{x_2}
\end{pmatrix},
\label{5eq:g2pupx}
\end{equation}
where $\displaystyle{\partial_{u_a}=\frac{\partial}{\partial u_a}}$, and $\displaystyle{\partial_{x_a}=\frac{\partial}{\partial x_a}}$, $(a=1,2)$.
Hence the differential of $\wp_{22}$ with respect to $u_2$ is interpreted by the algebraic differential of $(x_1+x_2)$,
\begin{gather}
\begin{split}
\frac{\partial}{\partial u_2}(x_1+x_2) &=
\left(
\frac{2y_1}{x_1-x_2}\frac{\partial}{\partial x_1}
+\frac{2y_2}{x_2-x_1}\frac{\partial}{\partial x_2}\right)(x_1+x_2)\\
&=
\frac{2y_1}{x_1-x_2}-\frac{2y_2}{x_1-x_2}.
\end{split}
\label{5eq:g2p2xx}
\end{gather}
Since $y_a^2=f(x_a)$ shows that $2y_a d y_a = f'(x_a) d x_a$, we compute
$\displaystyle{\frac{\partial^2}{\partial u_2^2}(x_1+x_2)}$ and then we obtain
\begin{equation}
\frac{\partial}{\partial x_1}
\left[\frac{2y_1}{x_1-x_2}-\frac{2y_2}{x_1-x_2}\right]=
\frac{-2(y_1+ y_2)}{(x_1-x_2)^2}
+\frac{2f'(x_1)}{2(x_1-x_2)y_1} ,
\end{equation}
\begin{equation}
\begin{split}
\frac{\partial^2}{\partial u_2^2}
(x_1+x_2)
&=
\frac{2y_1}{x_1-x_2}
\left[\frac{-2(y_1+y_2)}{(x_1-x_2)^2}
+\frac{2f'(x_1)}{2(x_1-x_2)y_1}\right]
+(1 \leftrightarrow 2)\\
&=
\frac{-4f(x_1)}{(x_1-x_2)^3}
+\frac{2f'(x_1)}{(x_1-x_2)^2}
+\frac{-4f(x_2)}{(x_1-x_2)^3}
+\frac{2f'(x_2)}{(x_1-x_2)^2}\\
&=
\frac{-2}{(x_1-x_2)^2} I(x_1, x_2),\\
\end{split}
\end{equation}
where
$\displaystyle{
I:=2 \frac{f(x_1)-f(x_2)}{x_1-x_2} - (f'(x_1)+f'(x_2))}$.
Direct computation gives
$$
I=(x_1-x_2)^2[-3(x_1+x_2)^2 + 2 x_1 x_2 - 2\lambda_4 (x_1+x_2)
+\lambda_3].
$$
Then we have
$$
\frac{\partial^2}{\partial u_2^2}
(x_1+x_2)=6(x_1+x_2)^2 +4 x_1 x_2
+4\lambda_4 (x_1+x_2)+2\lambda_3.
$$
 (\ref{5eq:JIF_g2}) shows that it is identified with (\ref{5eq:KdV_wpg2}).
Further since we have
$$
\begin{pmatrix}
\partial_t \\ \partial_s
\end{pmatrix}
=
\begin{pmatrix}
-4 & -4\lambda_4\\ 0 & 1
\end{pmatrix}
\begin{pmatrix}
\partial_{u_1} \\ \partial_{u_2}
\end{pmatrix}
=
\begin{pmatrix}
\partial u_1/\partial t &\partial u_2/\partial t \\ 
\partial u_1/\partial s &\partial u_2/\partial s  
\end{pmatrix}
\begin{pmatrix}
\partial_{u_1} \\ \partial_{u_2}
\end{pmatrix},
$$
$$
\begin{pmatrix}
d u_1 \\ d u_2
\end{pmatrix}
=
\begin{pmatrix}
\partial u_1/\partial t &\partial u_1/\partial s \\ 
\partial u_2/\partial t &\partial u_2/\partial s  
\end{pmatrix}
\begin{pmatrix}
dt \\ ds
\end{pmatrix}
=
\begin{pmatrix}
-4 &0 \\ -4\lambda_4 & 1
\end{pmatrix}
\begin{pmatrix}
dt \\ ds
\end{pmatrix},
$$
and thus for $\fv$, we obtain (\ref{5eq:KdV_v_g2}).
\end{proof}

\section{Graphical representation of $\wp$ function}

We assume that all $b_i$ in (\ref{eq:hyperCurve1}) are real number, $b_i < b_{i+1}$, $(i=1, \ldots, 4)$.
(\ref{5eq:g2dxdu}) shows that for a point $u = \tw((x_1, y_1)) +\tw((x_2, y_2)) \in \CC^2$, $d u_2$ gives the data of $d x_1$ and $d x_2$.

Let us consider $\displaystyle{\begin{pmatrix} 0 \\ d u_2\end{pmatrix}}$ and 
\begin{equation}
\begin{pmatrix}
d x_1 \\ dx_2
\end{pmatrix}
=
\frac{1}{x_2-x_1}
\begin{pmatrix}
-2y_1 d u_2\\
2y_2 d u_2
\end{pmatrix}.
\label{eq:dx_du2}
\end{equation}
If we integrate this differential equation, we get $x_1(u)$ and $x_2(u)$ as a point of $S^2 X$.
Using them, we basically have the solution of the KdV equation, $\wp_{22}(u)=x_1(u) + x_2(u)$.
We compute it by Euler's numerical quadrature \cite{RLeV} as follows.
\begin{enumerate}

\item 
Assume initial conditions $(x_1, x_2)$ such that $x_1 \in (b_1, b_2)$ and $x_1 \in (b_2, b_3)$, and a small parameter $\delta u_2$.

\item Let $u_2 := u_2+ \delta u_2$

\item Compute $\displaystyle{\delta x_1 = \frac{2y_1 \delta u_2}{x_1-x_2}  }$ and $\displaystyle{\delta x_2 = \frac{2y_2 \delta u_2}{x_2-x_1}  }$ from (\ref{eq:dx_du2}).

\item Let $x_1:=x_1+ \delta x_1$ and $x_2:=x_2+ \delta x_2$

\item If $x_1$ goes beyond the branch line $(b_1, b_2)$, go back to the inner direction by changing the leaf of the covering $\varpi_x : X \to \PP$ $(x,y)\mapsto x$, i.e., $y$ to $-y$. Similarly we do so for $x_2$.

\item Go to 2.

\end{enumerate}

\bigskip

We computed several cases for three hyperelliptic curves 
$(b_1, b_2, b_3, b_4, b_5) = (0, 1, 2, 3, 4)$, 
$(0, 1.8, 2, 3, 3.2)$ and $(0, 1.99, 2.0, 3, 3.01)$ as in Figures
\ref{fg:Fig01}, \ref{fg:Fig02} and \ref{fg:Fig03}. 
We put $\delta u_2=1.0 \times 10^{-4}$.
They may show the degenerating family of the curves to the degenerated curves of soliton solution \cite{Mat01b}, i.e., $y^2 = x(x-b_1)^2(x-b_2)^2$ for a certain $b_1, b_2 \in \CC$.
We checked the independence of the numerical parameter $\delta u_2$, which is guaranteed by the properties of the Abelian integrals.

Since in our method, we cannot evaluate the integrals from $\infty \in X$,
we let $u_0:=\tw(B_1+B_4)$, where $B_a:=(b_a,0)$ and 
$$
\tw(B_1, B_4)=\int_\infty^{B_1} \nuI{} + \int_\infty^{B_4} \nuI{}.
$$
The values of the Abelian integrals are obtained as the difference from $u_0$.
Then we obtained $x_1(\begin{pmatrix}0 \\u_2\end{pmatrix}+u_0)$ and 
$x_1(\begin{pmatrix}0 \\u_2\end{pmatrix}+u_0)$ in Figures
\ref{fg:Fig01}, \ref{fg:Fig02} and \ref{fg:Fig03} (a).

Since Theorem \ref{th:KdV2} shows that $\wp_{22}$ is regarded as the algebro-geometric solution of the KdV equation of genus two, we illustrated it in Figures
\ref{fg:Fig01}, \ref{fg:Fig02} and \ref{fg:Fig03}.

Modifying the above algorithm for the $u_1$ direction, we also compute the different $u_1$.
Further $\wp_{22}(\begin{pmatrix}u_1 \\u_2\end{pmatrix}+u_0)$ for $u_1=0.0$,
$1.0$ and $2.0$.
Since the directions of $u_1$ and time $t$ are opposite in the KdV equation, they appear to move from right to left.
The calculations are completed in a few seconds.

\begin{figure}
\begin{center}

\includegraphics[width=0.38\hsize, bb=0 0 497 359]{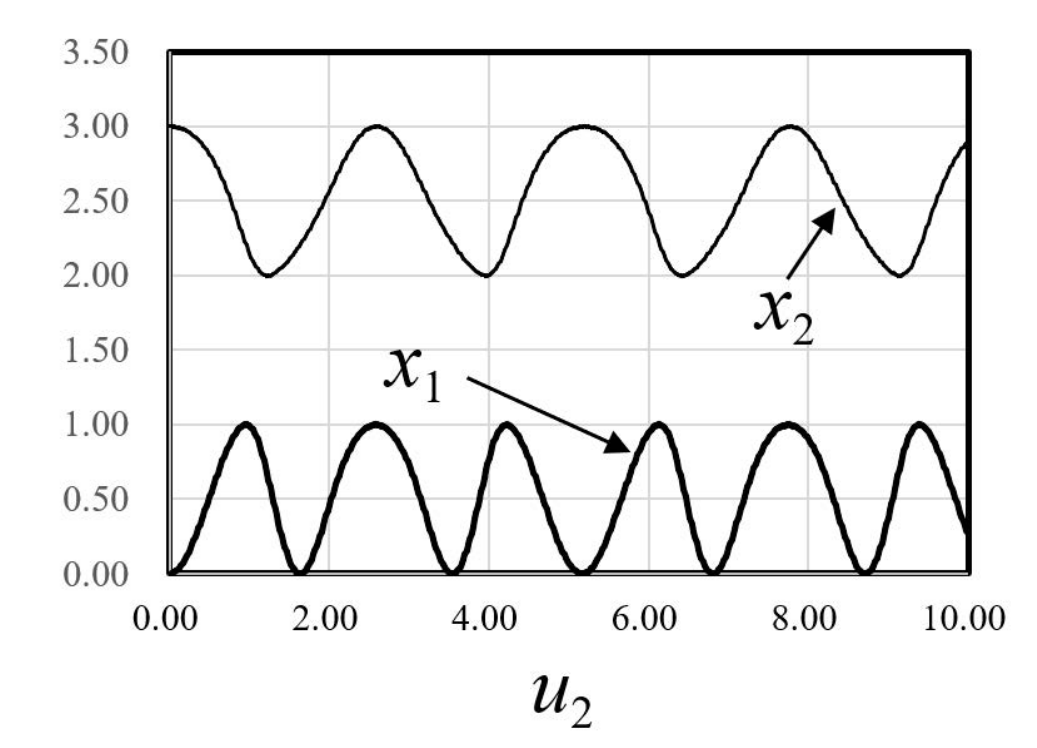}
\hskip 0.1\hsize
\includegraphics[width=0.41\hsize, bb=0 0 596 413]{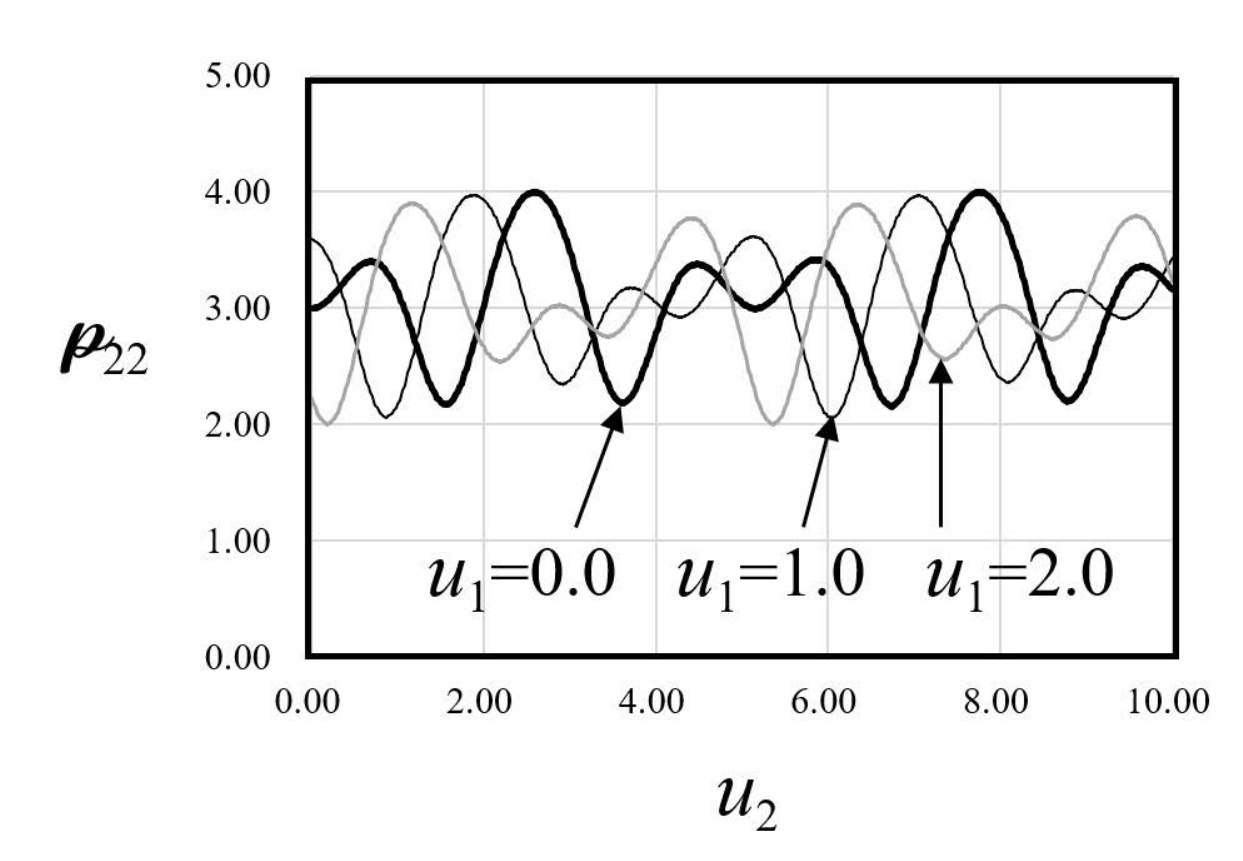}

(a) \hskip 0.45\hsize (b)

\end{center}

\caption{
The graphic representation of hyperelliptic functions of the curve
$(b_1, b_2, b_3, b_4, b_5) = (0, 1, 2, 3, 4)$:
We assume that $u_0:=w(B_1+B_4)$.
(a): $x_1(\begin{pmatrix}0 \\ u_2\end{pmatrix}+u_0)$ and 
$x_1(\begin{pmatrix}0 \\ u_2\end{pmatrix}+u_0)$.
(b): $\wp_{22}(\begin{pmatrix}u_1 \\ u_2\end{pmatrix}+u_0)$ for $u_1=0.0$,
$1.0$, and $2.0$.
}
\label{fg:Fig01}
\end{figure}

\begin{figure}
\begin{center}

\includegraphics[width=0.38\hsize, bb=0 0 503 358]{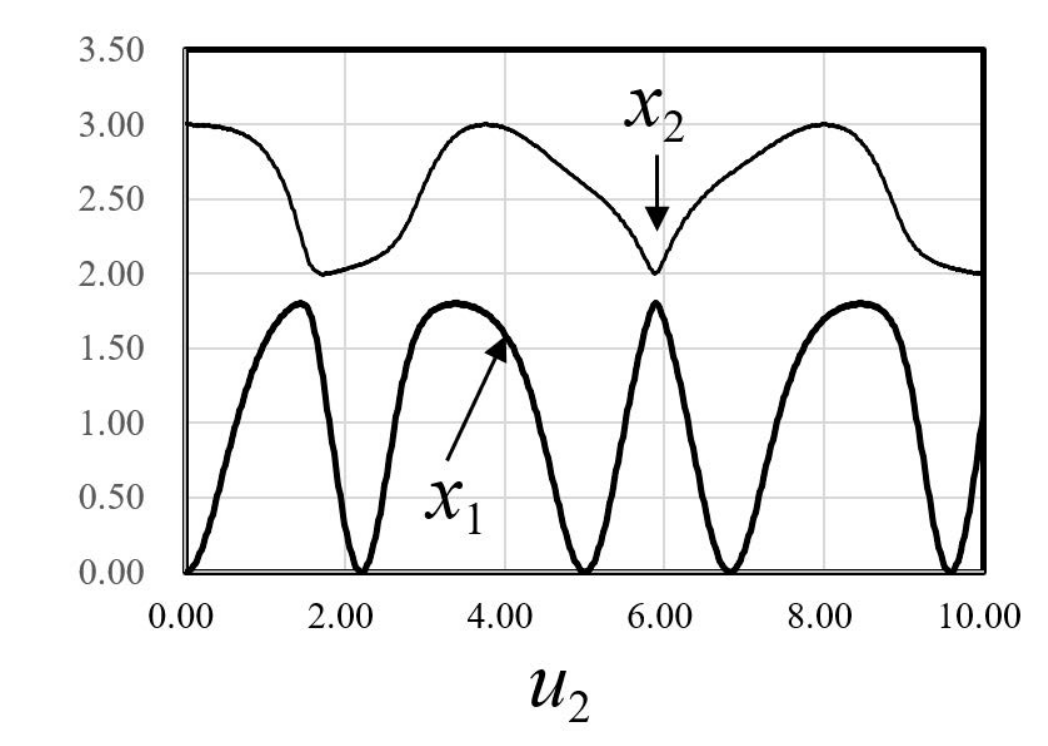}
\hskip 0.1\hsize
\includegraphics[width=0.41\hsize, bb=0 0 574 392]{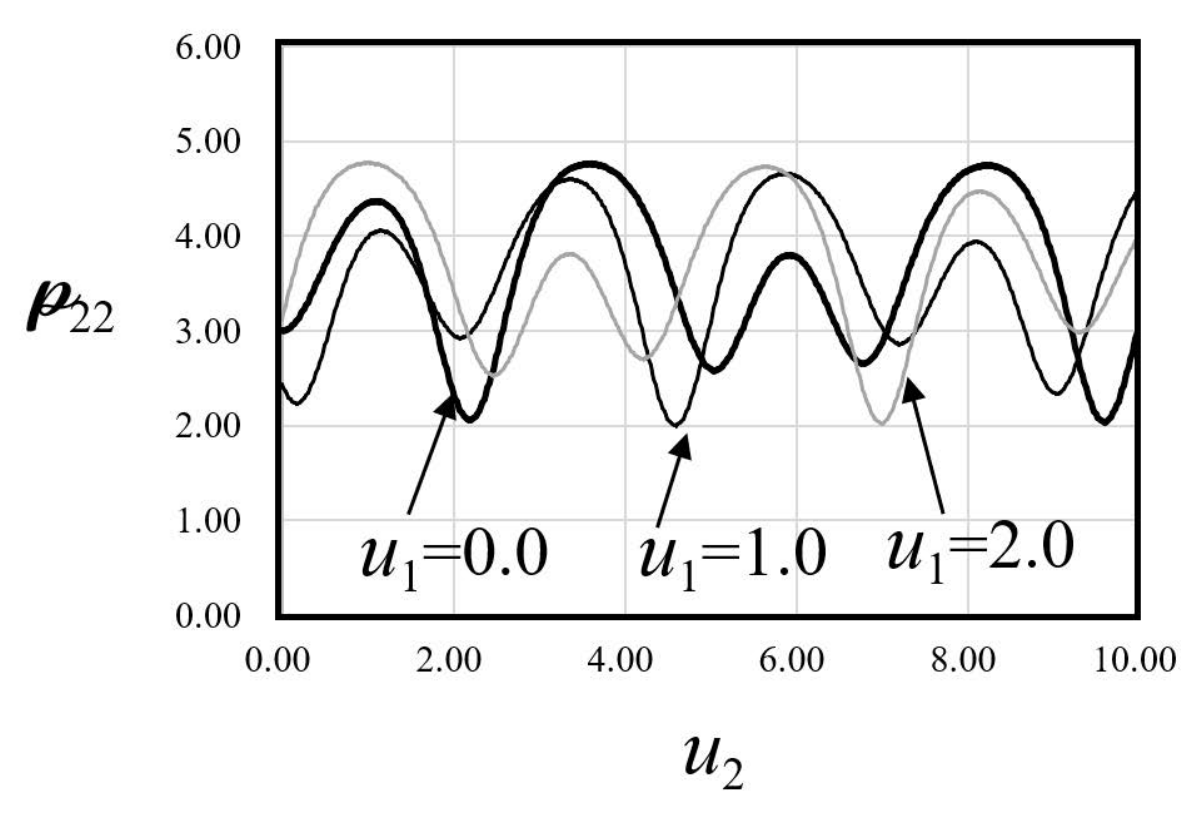}

(a) \hskip 0.45\hsize (b)

\end{center}

\caption{
The graphic representation of hyperelliptic functions of the curve
$(b_1, b_2, b_3, b_4, b_5) = (0, 1.8, 2, 3, 3.2)$:
We assume that $u_0:=w(B_1+B_4)$.
(a): $x_1(\begin{pmatrix}0 \\ u_2\end{pmatrix}+u_0)$ and 
$x_1(\begin{pmatrix}0 \\ u_2\end{pmatrix}+u_0)$.
(b): $\wp_{22}(\begin{pmatrix}u_1 \\ u_2\end{pmatrix}+u_0)$ for $u_1=0.0$,
$1.0$, and $2.0$.
}\label{fg:Fig02}
\end{figure}

\begin{figure}
\begin{center}

\includegraphics[width=0.38\hsize, bb=0 0 487 352]{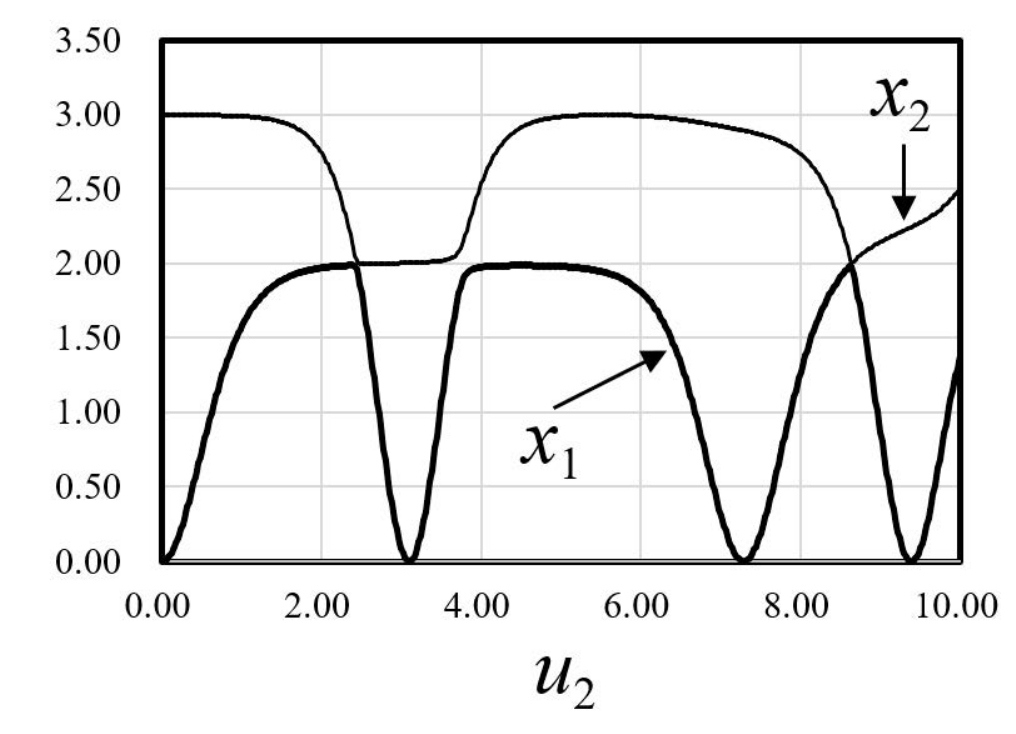}
\hskip 0.1\hsize
\includegraphics[width=0.41\hsize, bb=0 0 579 390]{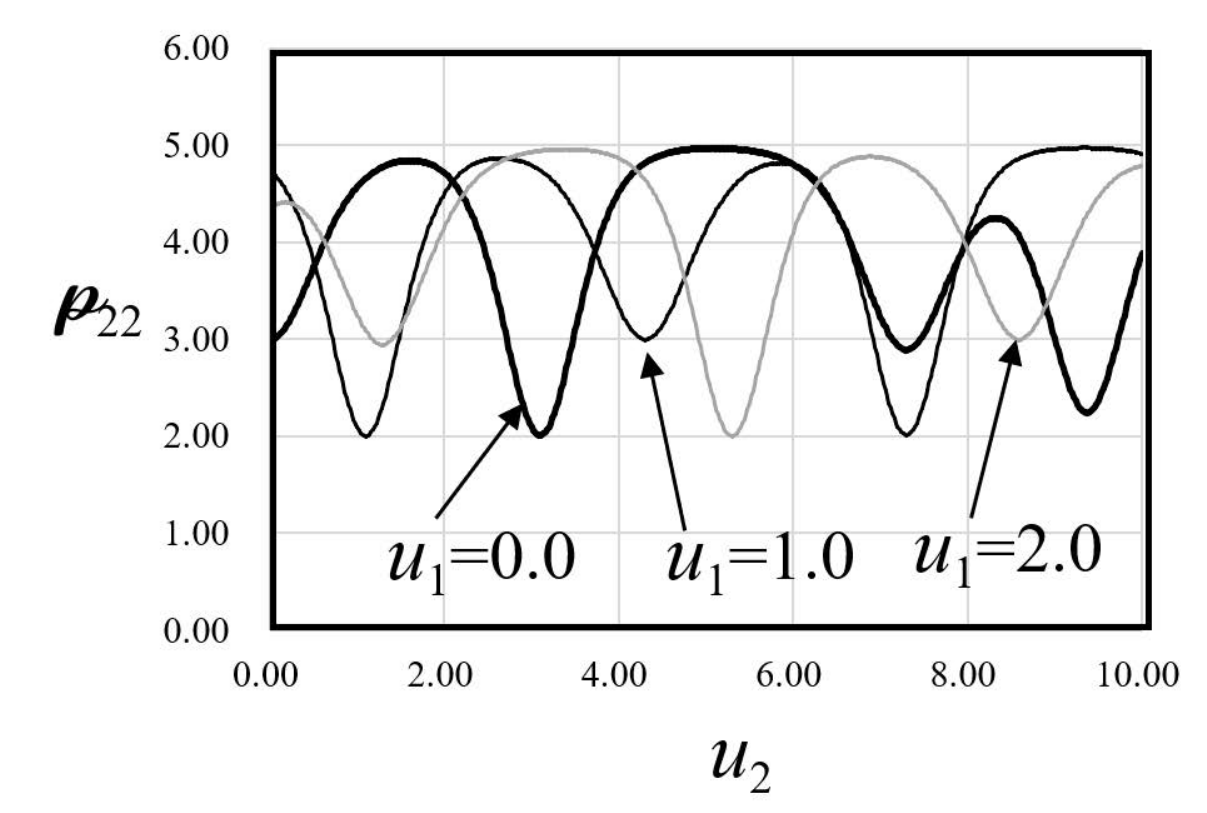}

(a) \hskip 0.45\hsize (b)

\end{center}

\caption{
The graphic representation of hyperelliptic functions of the curve
$(b_1, b_2, b_3, b_4, b_5) = and (0, 1.99, 2.0, 3, 3.01)$:
We assume that $u_0:=w(B_1+B_4)$.
(a): $x_1(\begin{pmatrix}0 \\ u_2\end{pmatrix}+u_0)$ and 
$x_1(\begin{pmatrix}0 \\ u_2\end{pmatrix}+u_0)$.
(b): $\wp_{22}(\begin{pmatrix}u_1 \\ u_2\end{pmatrix}+u_0)$ for $u_1=0.0$,
$1.0$, and $2.0$.
}\label{fg:Fig03}
\end{figure}

\section{Discussion and Conclusion}

In figures \ref{fg:Fig01}, \ref{fg:Fig02} and \ref{fg:Fig03} we have demonstrated the graphical representation of the hyperelliptic function $\wp_{22}$ of genus two for three cases.
Since we note that we are using only the data of the hyperelliptic curves as algebraic curves without any data of the theta functions, the computation is quite easy and its algorithm is simple.
There is no serious problem in choosing the computational parameters because the parameters in the computation are only the moduli parameters $b_1, \ldots, b_5$ and the initial points $(x_a, y_a)$, $(a=1,2)$: $y_a$ determines the directions of $\nuI{2}$.
(It is easy to check independence of the numerical parameter $\delta u_2$ due to good properties of the Abelian integrals.)

Since the computational method is very simple, it is easy to deal with these parameters of the curves and to find the parameter dependence.
Thus, we can numerically study the behavior of the degenerating limit of certain curves when they are real curves, as we have shown.
Furthermore, it is obvious that our method can be easily generalized to cases of higher genus and more complicated algebraic curves.
Though we have investigated the graphical representations of certain class of the quasi-periodic solutions of the modified KdV equation \cite{MP22, Mat23}, which contains crucial difficulties, it is easy to obtain the periodic solutions of the modified KdV equation for the real curves, i.e., $b_i \in \RR$.

In other words, it is clear that this method can be provided for various curves and allow the numerical investigations of periodic and quasi-periodic solutions of integrable equation.
We conclude that our graphical representation based on the Weierstrass-Baker theory of hyperelliptic functions is novel and useful.
It is expected that this method will be applied to various soliton equations and will lead to further developments in the algebraic study of integrable equations.

\bigskip
\bigskip

{\bf{Acknowledgment}:}
This research was born out of discussions with Saburo Kakei and Takashi Ichikawa at a workshop at Toyama Prefectural University, Oct. 12-14, 2023.
The recent results by Kakei \cite{Kakei} and the discussion made me realize the importance of applying this approach to real hyperelliptic curves.
I would like to express my sincere appreciation to both of them and to the organizers of the workshop.
I also acknowledge support from the Grant-in-Aid for Scientific Research (C) of Japan Society for the Promotion of Science, Grant No.21K03289.

\end{document}